\numberwithin{equation}{section}
\newtheorem{theorem}{Theorem}[section]
\newtheorem{corollary}[theorem]{Corollary}
\newtheorem{lemma}[theorem]{Lemma}
\theoremstyle{definition}
\newtheorem{definition}[theorem]{Definition}
\newtheorem{notation}[theorem]{Notation}
\def\A{\mathcal{A}}
\def\CC{\mathcal{C}}
\def\V{\mathcal{V}}
\def\F{\mathbb{F}}
\def\c{\mathbf{c}}
\def\v{\mathbf{v}}
\def\x{\mathbf{x}}
\begin{document}
\title{\textbf{On integral weight spectra of the MDS codes cosets of weight 1, 2, and 3}
\author{Alexander A. Davydov\footnote{The research of A.A.~Davydov was done at IITP RAS and supported by the Russian Government (Contract No 14.W03.31.0019).} \\
{\footnotesize Institute for Information Transmission Problems
(Kharkevich
institute), Russian Academy of Sciences}\\
{\footnotesize Bol'shoi Karetnyi per. 19, Moscow,
127051, Russian Federation. E-mail: adav@iitp.ru}
\and Stefano Marcugini\footnote{The research of  S. Marcugini and F.~Pambianco was
 supported in part by the Italian
National Group for Algebraic and Geometric Structures and their Applications (GNSAGA - INDAM) (No U-UFMBAZ-2017-000532, 28.11.2017) and by
University of Perugia (Project: Curve, codici e configurazioni di punti, Base Research
Fund 2018,  No 46459, 15.06.2018).}  \,and Fernanda Pambianco$^\dag$  \\
{\footnotesize Department of Mathematics and Computer Science, University of Perugia}\\
{\footnotesize Via Vanvitelli~1, Perugia, 06123, Italy. E-mail:
\{stefano.marcugini,fernanda.pambianco\}@unipg.it}}
}
\date{}
\maketitle
\textbf{Abstract.}
The weight of a coset of a code is the smallest Hamming weight of any vector in the coset. For a linear code of length $n$, we call \emph{integral weight spectrum} the overall numbers of weight $w$ vectors, $0\le w\le n$, in all the cosets of a fixed weight. For maximum distance separable (MDS) codes, we obtained new convenient formulas of integral weight spectra of cosets of weight 1 and 2. Also, we give the spectra for the weight 3 cosets of MDS codes with minimum distance $5$ and covering radius $3$.

Keywords: cosets weight distribution, MDS codes.

Mathematics Subject Classication (2010). 94B05, 51E21, 51E22

\section{Introduction}\label{sec_Intro}
Let $\F_{q}$ be the Galois field with $q$ elements, $\F_{q}^*=\F_{q}\setminus\{0\}$. Let $\F_{q}^{n}$ be
the space of $n$-dimensional vectors over ${\mathbb{F}}_{q}$.  We denote by  $[n,k,d]_{q}R$ an $\F_q$-linear code of length $n$, dimension $k$, minimum distance~$d$, and covering radius $R$. If $d=n-k+1$,   it is a maximum distance separable (MDS) code.
 For an introduction to coding theory see \cite{Blahut,MWS,HufPless,Roth}.

A \emph{coset} of a code is a translation of the code. A coset $\V$ of an $[n,k,d]_{q}R$ code $\CC$ can be represented as
\begin{align}\label{eq1_coset}
  \V=\{\x\in\F_q^n\,|\,\x=\c+\v,\c\in \CC\}\subset\F_q^n
\end{align}
 where $\v\in \V$ is a vector  fixed for the given representation; see  \cite{Blahut,HufPless,MWS,HandbookCodes,Roth} and the references therein.

 The weight distribution of code cosets is an important combinatorial property of a code. In particular, the distribution serves to estimate decoding results.  There are many papers connected with distinct aspects of the weight distribution of cosets for codes over distinct fields and rings, see e.g. \cite{AsmMat,Blahut,BonnDuursma,CharpHelZin,CheungIEEE1989,CheungIEEE1992,Delsarte4Fundam,DelsarteLeven,Helleseth,JurrPellik,KaipaIEEE2017,KasLin,Schatz1980,MW1963,ZDK_DHRIEEE2019}, \cite[Sect.\,6.3]{DelsarteBook}, \cite[Sect.\,7]{HufPless}, \cite[Sections 5.5, 6.6, 6.9]{MWS}, \cite[Sect.\,10]{HandbookCodes} and the references therein.

For a linear code of length $n$, we call \emph{integral weight spectrum} the overall numbers of weight $w$ vectors, $0\le w\le n$, in all the cosets of a fixed weight.

\emph{In this work,} for MDS codes, using and developing the results of \cite{CheungIEEE1989}, we obtain new convenient formulas of integral weight spectra of cosets of weight 1 and 2.
The obtained formulas for weight 1 and 2 cosets, seem to be  simple and  expressive.

This paper is organized as follows. Section \ref{sec_prelimin} contains preliminaries. In Section \ref{sec_wd1}, we consider the integral weight spectrum of the weight 1 cosets of MDS codes with minimum distance $d\ge3$. In Section \ref{sec_wd2}, we obtain the integral weight spectrum of the weight 2 cosets of MDS codes with minimum distance $d\ge5$. In Section \ref{sec_wd3}, we give the spectra for the weight 3 cosets of MDS codes with minimum distance $5$ and covering radius $3$.

\section{Preliminaries}\label{sec_prelimin}

\subsection{Cosets of a linear code}\label{subsec_cosets}
We give a few known definitions and properties connected with cosets of linear codes, see e.g. \cite{Blahut,HufPless,MWS,HandbookCodes,Roth} and the references therein.

We consider a coset $\V$ of an $[n,k,d]_{q}R$ code $\CC$ in the form \eqref{eq1_coset}. We have $\#\V=\#\CC=q^k$. One can take as $\v$ any vector of $\V$. So, there are $\#\V=q^k$ formally distinct representations of the form~\eqref{eq1_coset}; all they give the same coset $\V$. If $\v\in \CC$, we have $\V=\CC$. The distinct cosets of~$\CC$ partition $\F_{q}^{n}$ into $q^{n-k}$ sets of size $q^k$.

We remind that the \emph{Hamming weight} of the vector $\x\in \F_q^n$ is the number of nonzero entries in $\x$.

\begin{notation}\label{notation_coset}
For an $[n,k,d]_{q}R$ code $\CC$ and its coset $\V$ of the form \eqref{eq1_coset}, the following notation is used:
\begin{align*}
&t=\left\lfloor\frac{d-1}{2}\right\rfloor&&\text{the number of correctable errors};\displaybreak[3]\\
&A_w(\CC)&&\text{the number of weight $w$ codewords of the code $\CC$;}\displaybreak[3]\\
&A_w(\V)&&\text{the number of weight $w$ vectors in the coset }\V;\displaybreak[3]\\
&\text{the weight of a coset}&&\text{the smallest Hamming weight of any vector in the coset;}\displaybreak[3]\\
&\V^{(W)}&&\text{a coset of weight }W;~~A_w(\V^{(W)})=0\text{ if }w<W;\displaybreak[3]\\
&\text{a coset leader}&&\text{a vector in the coset of the smallest Hamming weight;}\displaybreak[3]\\
&\A_w^{\Sigma}(\V^{(W)})&&\text{the overall number of weight $w$ vectors in all cosets of weight }W;\displaybreak[3]\\
&\A_w^{\Sigma}(\V^{\le W})&&\text{the overall number of weight $w$ vectors in all cosets of weight }\le W.
\end{align*}
\end{notation}

In cosets of weight $> t$, a  vector of the minimal weight is not necessarily unique. Cosets of weight $\leq t$ have a unique leader.

 The code $\CC$  is the coset of weight zero.  The leader of $\CC$ is the zero vector of $ \F_q^{n}$.
\begin{definition} Let $\CC$ be an $[n,k,d]_{q}R$ code and let $\V^{(W)}$ be its coset of weight $W$. Let $\A_w^{\Sigma}(\V^{(W)})$ be the overall number of weight $w$ vectors in all cosets of weight $W$.
For a fixed~$W$, we call the set $\{\A_w^{\Sigma}(\V^{(W)})|w=0,1,\ldots,n\}$
\emph{integral weight spectrum} of the code cosets of weight $W$.
\end{definition}

Distinct representations of the integral weight spectra $\A_w^{\Sigma}(\V^{(W)})$ and values of $\A_w^{\Sigma}(\V^{\le W})$ are considered in the literature, see e.g. \cite[Th.\,14.2.2]{Blahut}, \cite{CheungIEEE1989,CheungIEEE1992}, \cite[Lem.\,2.14]{MW1963}, \cite[Th.\,6.22]{MWS}. For instance, in \cite[Eqs.\,(11)--(13)]{CheungIEEE1989}, for an MDS code correcting $t$-fold errors, the value $D_u$ gives  $\A_{u}^{\Sigma}(\V^{\le t})$.

\subsection{Some useful relations}
For $w\ge d$, the weight distribution $A_{w}(\CC)$ of an $[n,k,d=n-k+1]_q$ MDS code $\CC$ has the following form, see e.g. \cite[Th.\,7.4.1]{HufPless}, \cite[Th.\,11.3.6]{MWS}:
\begin{align}\label{eq2_wd_MDS}
 A_{w}(\CC)=\binom{n}{w}\sum_{j=0}^{w-d}(-1)^j\binom{w}{j}(q^{w-d+1-j}-1).
\end{align}

In $\F_q^n$, the volume of a sphere of radius $t$ is
\begin{align}\label{eq2_sphere}
 V_n(t)=\sum_{i=0}^t(q-1)^i\binom{n}{i}.
\end{align}

The following combinatorial identities are well known, see e.g. \cite[Sect.\,1, Eqs.\,(I),(IV), Problem 9(a)]{Riordan}:
\begin{align}
 &\binom{n}{k}=\binom{n-1}{k}+\binom{n-1}{k-1}.\label{eq2_Riordan_ident0}\displaybreak[3]\\
 &\binom{n}{m}\binom{m}{p}=\binom{n}{p}\binom{n-p}{m-p}=\binom{n}{m-p}\binom{n-m+p}{p}.\label{eq2_Riordan_ident}\displaybreak[3]\\
 &\sum_{k=0}^m(-1)^k\binom{n}{k}=(-1)^m\binom{n-1}{m}.\label{eq2_Riordan_ident2}
\end{align}

In \cite[Eqs.\,(11)--(13)]{CheungIEEE1989}, for an $[n,k,d\ge 2t+1]_q$ MDS code correcting $t$-fold errors, the following relations for $\A_u^{\Sigma}(\V^{\le t})$ denoted by $D_u$ are given:
\begin{align}\label{eq2_Cheung}
& \A_u^{\Sigma}(\V^{\le t})=D_u=\binom{n}{u}\sum_{j=0}^{u-d+t}(-1)^jN_j,~d-t\le u\le n,\displaybreak[3] \\
&\text{where}\displaybreak[3]\notag \\
&N_j=\binom{u}{j}\left[q^{u-d+1-j}V_n(t)-\sum_{i=0}^t\binom{u-j}{i}(q-1)^i\right]~\text{ if }~0\le j\le u-d,\label{eq2_Cheung_2}\displaybreak[3] \\
&N_j=\binom{u}{j}\left[\,\sum_{w=d-u+j}^t\binom{n-u+j}{w}\sum_{i=0}^{w-d+u-j}(-1)^i\binom{w}{i}(q^{w-d+u-j-i+1}-1)\right.\label{eq2_Cheung_3}\displaybreak[3]\\
&\left.\times\sum_{s=w}^t\binom{u-j}{s-w}(q-1)^{s-w}\right]~\text{ if }~u-d+1\le j \le u-d+t.\notag
\end{align}

\section{The integral weight spectrum of the weight 1 cosets of MDS codes with minimum distance $d\ge3$}\label{sec_wd1}

In Sections \ref{sec_wd1}--\ref{sec_wd3}, we represent the  values $\A_w^{\Sigma}(\V^{(W)})$ in distinct forms that can be convenient in distinct utilizations, e.g. for estimates of the decoder error probability, see \cite{CheungIEEE1989,CheungIEEE1992} and the references therein.

 We use (with some transformations) the results of \cite[Eqs.\,(11)--(13)]{CheungIEEE1989} where, for an MDS code correcting $t$-fold errors, the value $D_u$ gives the overall number $\A_{u}^{\Sigma}(\V^{\le t})$ of weight $u$ vectors in all cosets of weight $\le t$.
We cite \cite[Eqs.\,(11)--(13)]{CheungIEEE1989} by formulas \eqref{eq2_Cheung}--\eqref{eq2_Cheung_3}, respectively.

In the rest of the paper we put that a sum $\sum_{i=0}^A\ldots$ is equal to zero if $A<0$.
\begin{lemma}
\label{lem4_AwSigma<=1MDS}
\emph{\cite[Eqs.\,(11)--(13)]{CheungIEEE1989}}
Let $d-1\le w\le n$. For an $[n,k,d=n-k+1]_q$ MDS code $\CC$ of minimum distance $d\ge3$,  the overall number $\A_{w}^{\Sigma}(\V^{\le1})$ of weight $w$ vectors in all cosets of weight $\le1$ is as follows:
\begin{align}\label{eq4_AwSigma<=1MDS}
&\A_w^{\Sigma}(\V^{\le1})=\binom{n}{w}\left[\sum_{j=0}^{w-d}(-1)^j\binom{w}{j}
\left[q^{w-d+1-j}(1+n(q-1))-1-(w-j)(q-1)\right]\right.\displaybreak[3]\\
&\left.-(-1)^{w-d}\binom{w}{d-1}(n-d+1)(q-1)\right].\notag
 \end{align}
\end{lemma}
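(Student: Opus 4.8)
The plan is to specialize the general formula \eqref{eq2_Cheung}--\eqref{eq2_Cheung_3} of Cheung to the case $t=1$ and then simplify. When $t=1$, the sphere volume is $V_n(1)=1+n(q-1)$ by \eqref{eq2_sphere}, and the inner sum $\sum_{i=0}^t\binom{u-j}{i}(q-1)^i$ in \eqref{eq2_Cheung_2} collapses to $1+(u-j)(q-1)$. So for the range $0\le j\le u-d$ the coefficient $N_j$ becomes $\binom{u}{j}\bigl[q^{u-d+1-j}(1+n(q-1))-1-(u-j)(q-1)\bigr]$, which is exactly the bracketed expression multiplying $\binom{w}{j}$ in \eqref{eq4_AwSigma<=1MDS} (after renaming $u\to w$). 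The upper summation limit in \eqref{eq2_Cheung} is $u-d+t=w-d+1$, so there is exactly one extra term, $j=w-d+1$, coming from the regime $u-d+1\le j\le u-d+t$ governed by \eqref{eq2_Cheung_3}; this term carries the sign $(-1)^{w-d+1}$.

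First I would isolate and evaluate that last term. For $j=w-d+1$ and $t=1$, the index $w$ in \eqref{eq2_Cheung_3} runs over $d-u+j=1\le w'\le t=1$, so only $w'=1$ survives; then the exponent $w'-d+u-j+1=1-d+w-(w-d+1)+1=1$, so $\sum_{i=0}^{0}(-1)^i\binom{1}{i}(q^{1-i}-1)=q-1$, and the $\binom{n-u+j}{w'}$ factor is $\binom{n-u+j}{1}=n-d+1$. Finally the tail sum $\sum_{s=w'}^{t}\binom{u-j}{s-w'}(q-1)^{s-w'}$ reduces to the single term $s=1$, equal to $1$. Hence $N_{w-d+1}=\binom{w}{d-1}(n-d+1)(q-1)$ (using $\binom{w}{w-d+1}=\binom{w}{d-1}$ via \eqref{eq2_Riordan_ident} or the obvious symmetry), and with its sign $(-1)^{w-d+1}$ it contributes $-(-1)^{w-d}\binom{w}{d-1}(n-d+1)(q-1)$ inside the outer bracket. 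Pulling everything together under the common factor $\binom{n}{w}$ gives \eqref{eq4_AwSigma<=1MDS}. One should also check the boundary case $w=d-1$: then $w-d=-1$, the first sum is empty by the stated convention, $j$ ranges only over the single value $j=w-d+1=0$, and the formula correctly returns $\binom{n}{d-1}(n-d+1)(q-1)$, which counts the weight-$(d-1)$ vectors that are the unique leaders of the weight-1 cosets — a useful sanity check, since cosets of weight $\le t=1$ are exactly the code together with the cosets of weight $1$.

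The only mildly delicate point is bookkeeping with the edge of the summation range in \eqref{eq2_Cheung_3}: one must confirm that for $t=1$ the double index set $\{(w',s):\,d-u+j\le w'\le 1,\ w'\le s\le 1\}$ is nonempty only for $j=w-d+1$ (forcing $w'=s=1$), and is genuinely empty for $j\le w-d$, so no cross terms leak into the first sum. After that the argument is a routine substitution. I would write it as: (i) quote \eqref{eq2_Cheung}--\eqref{eq2_Cheung_3} with $t=1$, $u=w$; (ii) evaluate $N_j$ for $0\le j\le w-d$; (iii) evaluate the single term $N_{w-d+1}$ as above; (iv) combine and invoke \eqref{eq2_Riordan_ident} for $\binom{w}{w-d+1}=\binom{w}{d-1}$; (v) remark on the $w=d-1$ case. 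I do not expect any serious obstacle; the main care needed is in the index-range analysis of step~(iii).
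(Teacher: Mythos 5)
Your proof is correct and follows exactly the paper's route: substitute $t=1$ into \eqref{eq2_Cheung}--\eqref{eq2_Cheung_3}, use $V_n(1)=1+n(q-1)$ from \eqref{eq2_sphere}, and observe that the only contribution from \eqref{eq2_Cheung_3} is $j=w-d+1$ with the inner index forced to equal $1$, giving $N_{w-d+1}=\binom{w}{d-1}(n-d+1)(q-1)$ with sign $(-1)^{w-d+1}$ --- the paper's own proof is a one-line version of precisely this computation. The only quibble is in your closing sanity check: the weight-$(d-1)$ vectors counted at $w=d-1$ lie in weight-$1$ cosets but are not their leaders (the leaders have weight $1$); the value $\binom{n}{d-1}(n-d+1)(q-1)=d\,A_d(\CC)$ is nevertheless the correct count.
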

\begin{proof}
  In the relations for $D_u$ of \cite{CheungIEEE1989}  cited by \eqref{eq2_Cheung}--\eqref{eq2_Cheung_3}, we put $t=1$ and then use \eqref{eq2_sphere}. In~\eqref{eq2_Cheung_3}, we have $j=u-d+1$ whence $w=1$ in all terms. Finally, we change $u$ by $w$ to save the notations of this paper.
\end{proof}

\begin{lemma}\label{lem4_Riord}
The following holds:
\begin{align}\label{eq4_combin_equal}
\sum_{j=0}^{m}(-1)^j\binom{w}{j}\binom{w-j}{v}=(-1)^{m}\binom{w}{v}\binom{w-v-1}{m}.
\end{align}
\end{lemma}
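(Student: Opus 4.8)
The plan is to prove the identity
\[
\sum_{j=0}^{m}(-1)^j\binom{w}{j}\binom{w-j}{v}=(-1)^{m}\binom{w}{v}\binom{w-v-1}{m}
\]
by the standard trinomial-revision trick followed by an application of the partial-sum identity \eqref{eq2_Riordan_ident2}. First I would apply \eqref{eq2_Riordan_ident} to the product $\binom{w}{j}\binom{w-j}{v}$. Writing $n=w$, $m=w-j$, $p=v$ in the first equality of \eqref{eq2_Riordan_ident} (after noting $\binom{w-j}{v}=\binom{w-j}{w-j-v}$ is the awkward part — better to use $\binom{w}{j}\binom{w-j}{v}$ directly): since $\binom{w}{j}\binom{w-j}{v}=\binom{w}{v}\binom{w-v}{j}$ when $j+v\le w$ (this is exactly \eqref{eq2_Riordan_ident} read as $\binom{w}{j+v}\binom{j+v}{v}=\binom{w}{v}\binom{w-v}{j}$ combined with $\binom{w}{j+v}\binom{j+v}{j}=\binom{w}{j}\binom{w-j}{v}$, using $\binom{j+v}{v}=\binom{j+v}{j}$), the binomial coefficient $\binom{w}{v}$ factors out of the sum.

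Next I would pull $\binom{w}{v}$ out front, leaving
\[
\binom{w}{v}\sum_{j=0}^{m}(-1)^j\binom{w-v}{j}.
\]
Then I would apply \eqref{eq2_Riordan_ident2} with $n$ replaced by $w-v$ and the summation index $m$ kept as is, which gives $\sum_{j=0}^{m}(-1)^j\binom{w-v}{j}=(-1)^m\binom{w-v-1}{m}$. Multiplying back by $\binom{w}{v}$ yields the right-hand side exactly.

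The main obstacle — really the only subtlety — is bookkeeping on the ranges of the indices: the factorization $\binom{w}{j}\binom{w-j}{v}=\binom{w}{v}\binom{w-v}{j}$ is valid as an identity of binomial coefficients for all integers $j\ge 0$ (both sides vanish when $j>w-v$, and when $v>w$ both sides vanish), so no case analysis is needed, but I would state this explicitly to reassure the reader that the convention "a sum $\sum_{i=0}^{A}\ldots$ is zero if $A<0$" together with the usual convention $\binom{a}{b}=0$ for $b<0$ or $b>a$ keeps everything consistent. One should also check the degenerate cases $v>w$ and $m$ large directly against the stated right-hand side; both reduce to $0=0$. I expect the whole argument to take only a few lines once the two cited identities \eqref{eq2_Riordan_ident} and \eqref{eq2_Riordan_ident2} are invoked in this order.
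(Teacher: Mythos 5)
Your proposal is correct and follows exactly the paper's own route: rewrite $\binom{w}{j}\binom{w-j}{v}=\binom{w}{v}\binom{w-v}{j}$ via the trinomial-revision identity \eqref{eq2_Riordan_ident}, factor $\binom{w}{v}$ out of the sum, and finish with the alternating partial-sum identity \eqref{eq2_Riordan_ident2}. The extra remarks on degenerate index ranges are fine but not needed beyond what the paper's conventions already cover.
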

\begin{proof}
 In \eqref{eq2_Riordan_ident}, we put $n=w$, $p=j$, $m-p=v$, and obtain
   \begin{align*}
 \sum_{j=0}^{m}(-1)^j\binom{w}{j}\binom{w-j}{v}=\binom{w}{v}\sum_{j=0}^{m}(-1)^j\binom{w-v}{j}.
   \end{align*}
Now we use \eqref{eq2_Riordan_ident2}.
\end{proof}
\begin{lemma}\label{lem4_-1}
Let $d-1\le w\le n$. The following holds:
\begin{align*}
 &\sum_{j=0}^{w+1-d}(-1)^j\binom{w}{j}q^{w+1-d-j}=\sum_{j=0}^{w-d}(-1)^j\binom{w}{j}\left(q^{w+1-d-j}-1\right)-(-1)^{w-d}\binom{w-1}{d-2}.
\end{align*}
\end{lemma}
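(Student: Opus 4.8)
The plan is to peel off the top term $j = w+1-d$ from the sum on the left-hand side, so that what remains there is exactly the power-of-$q$ part that also occurs on the right-hand side. Writing out the right-hand side as $\sum_{j=0}^{w-d}(-1)^j\binom{w}{j}q^{w+1-d-j} - \sum_{j=0}^{w-d}(-1)^j\binom{w}{j} - (-1)^{w-d}\binom{w-1}{d-2}$ and the left-hand side as $\sum_{j=0}^{w-d}(-1)^j\binom{w}{j}q^{w+1-d-j} + (-1)^{w+1-d}\binom{w}{w+1-d}$, the common sum cancels, and the assertion becomes the purely combinatorial identity
\begin{align*}
(-1)^{w+1-d}\binom{w}{w+1-d} + \sum_{j=0}^{w-d}(-1)^j\binom{w}{j} + (-1)^{w-d}\binom{w-1}{d-2} = 0.
\end{align*}

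Next I would simplify the signs: since $(-1)^{w+1-d} = -(-1)^{w-d}$ and, by \eqref{eq2_Riordan_ident2} with $n = w$ and $m = w-d$, $\sum_{j=0}^{w-d}(-1)^j\binom{w}{j} = (-1)^{w-d}\binom{w-1}{w-d}$, the displayed identity is equivalent, after dividing by $(-1)^{w-d}$, to
\begin{align*}
\binom{w}{w+1-d} = \binom{w-1}{w-d} + \binom{w-1}{d-2}.
\end{align*}
Finally, using the symmetry of binomial coefficients, $\binom{w}{w+1-d} = \binom{w}{d-1}$ and $\binom{w-1}{w-d} = \binom{w-1}{d-1}$, so this last equation is precisely Pascal's rule \eqref{eq2_Riordan_ident0} applied with $n = w$ and $k = d-1$. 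That finishes the proof.

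The argument is short and the only genuine care needed is bookkeeping: one must keep the parities $(-1)^{w+1-d}$ and $(-1)^{w-d}$ straight, and one must use the convention (stated just before Lemma \ref{lem4_AwSigma<=1MDS}) that a sum with negative upper limit is empty so that the boundary case $w = d-1$ is covered. In that case $w+1-d = 0$, the sum $\sum_{j=0}^{w-d}$ on the right is empty, $(-1)^{w-d} = -1$, and $\binom{w-1}{d-2} = \binom{d-2}{d-2} = 1$, so both sides equal $1$, in agreement with the general computation above. I do not expect any serious obstacle beyond this sign/boundary accounting.
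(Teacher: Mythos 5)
Your proof is correct and follows essentially the same route as the paper's: peel off the $j=w+1-d$ term, evaluate $\sum_{j=0}^{w-d}(-1)^j\binom{w}{j}$ via \eqref{eq2_Riordan_ident2}, and finish with Pascal's rule \eqref{eq2_Riordan_ident0}. The sign bookkeeping and the boundary case $w=d-1$ both check out.
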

\begin{proof} We write the left sum of the assertion as
\begin{align*}
\sum_{j=0}^{w-d}(-1)^j\binom{w}{j}\left(q^{w+1-d-j}-1+1\right)-(-1)^{w-d}\binom{w}{d-1}.
\end{align*}
By \eqref{eq2_Riordan_ident2},
\begin{align*}
\sum_{j=0}^{w-d}(-1)^j\binom{w}{j} =(-1)^{w-d}\binom{w-1}{d-1}.
\end{align*}
 Finally, we apply \eqref{eq2_Riordan_ident0}.
\end{proof}
For an $[n,k,d]_q$ code $\CC$, we denote
\begin{align}
\Omega_w^{(j)}(\CC)=(-1)^{w-d}\binom{n-j}{w-j}\binom{w-j-1}{d-j-2}.\label{eq4_Omega}
\end{align}
Also, we denote
\begin{align}\label{eq4_Phi}
&\Phi_w^{(j)}=(-1)^{w-5}\left[\binom{q+1}{w}\binom{w-1}{3}-\binom{q+1-j}{w-j}\binom{w-1-j}{3-j}\right].
\end{align}
\begin{theorem}\label{th4_Aw1sum}
\emph{\textbf{(integral weight spectrum 1)}}

Let $d-1\le w\le n$.
Let $\CC$ be an $[n,k,d=n-k+1]_q$ MDS code of minimum distance $d\ge3$.

\textbf{\emph{(i)}} For the code $\CC$, the overall number $\A_{w}^{\Sigma}(\V^{(1)})$ of weight $w$ vectors in all weight $1$ cosets is as follows:
\begin{align}\label{eq4_Aw1sum}
&\A_{w}^{\Sigma}(\V^{(1)})=\binom{n}{w}(q-1)\left[n\sum_{j=0}^{w+1-d}(-1)^j\binom{w}{j}q^{w+1-d-j}+(-1)^{w-d}w\binom{w-2}{d-3}\right]\displaybreak[3]\\
&=n(q-1)\left[\binom{n}{w}\sum_{j=0}^{w+1-d}(-1)^j\binom{w}{j}q^{w+1-d-j}+\Omega_w^{(1)}(\CC)\right]\label{eq4_Aw1sum_2}\displaybreak[3]\\
&=n(q-1)\left[\binom{n}{w}\sum_{j=0}^{w-d}(-1)^j\binom{w}{j}\left(q^{w+1-d-j}-1\right)-\Omega_w^{(0)}(\CC)+\Omega_w^{(1)}(\CC)\right]\label{eq4_Aw1sum_3}\displaybreak[3]\\
&=n(q-1)\left[A_w(\CC)-\Omega_w^{(0)}(\CC)+\Omega_w^{(1)}(\CC)\right]\label{eq4_Aw1sum_4}\displaybreak[3]\\
&=n(q-1)\left[A_w(\CC)-(-1)^{w-d}\left(\binom{n}{w}\binom{w-1}{d-2}-\binom{n-1}{w-1}\binom{w-2}{d-3}\right)\right].\label{eq4_Aw1sum_5}
\end{align}

\textbf{\emph{(ii)}} Let the code $\CC$ be a $[q+1,k,d=q+2-k]_q$ MDS code of length $n=q+1$ and minimum distance $d\ge3$. For $\CC$, the overall number $\A_{w}^{\Sigma}(\V^{(1)})$ of weight $w$ vectors in all weight $1$ cosets is as follows
\begin{align}\label{eq4_Aw1sum_q+1}
 &\A_{w}^{\Sigma}(\V^{(1)})=\binom{q+1}{w}(q-1)\left[q^{w+2-d}-\sum_{i=0}^{w-d}(-1)^{i}\left(\binom{w}{i+1}-\binom{w}{i}\right)q^{w+1-d-i}\right.\\
 &\left.-(-1)^{w-d}\left(\binom{w}{d-1}-w\binom{w-2}{d-3}\right)\right],~d-1\le w\le q+1.\notag
\end{align}

\textbf{\emph{(iii)}} Let the code $\CC$ be a $[q+1,q-3,5]_q$ MDS code of length $n=q+1$ and minimum distance  $d=5$. For $\CC$, the overall number $\A_{w}^{\Sigma}(\V^{(1)})$ of weight $w$ vectors in all weight $1$ cosets is as follows
\begin{align}\label{eq4_Aw1sum_q+1_5}
\A_{w}^{\Sigma}(\V^{(1)})=(q^2-1)\left[A_w(\CC)-\Phi_w^{(1)}\right],~4\le w\le q+1.
\end{align}
\end{theorem}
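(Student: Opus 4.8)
The plan is to deduce the whole theorem from Lemma~\ref{lem4_AwSigma<=1MDS} by the elementary observation that the cosets of weight $\le1$ consist of the code $\CC$ itself (the unique coset of weight $0$) together with all cosets of weight $1$, so that $\A_w^{\Sigma}(\V^{(1)})=\A_w^{\Sigma}(\V^{\le1})-A_w(\CC)$. Substituting the right-hand side of \eqref{eq4_AwSigma<=1MDS} and the MDS weight distribution \eqref{eq2_wd_MDS}, I would carry out the subtraction inside the bracket term by term: the two occurrences of $q^{w-d+1-j}$ combine (after the cancellation $-1+1=0$) into $n(q-1)q^{w-d+1-j}$, so the common factor $q-1$ pulls out front and what remains is $\binom{n}{w}(q-1)$ times $n\sum_{j=0}^{w-d}(-1)^j\binom{w}{j}q^{w-d+1-j}-\sum_{j=0}^{w-d}(-1)^j\binom{w}{j}(w-j)-(-1)^{w-d}\binom{w}{d-1}(n-d+1)$.

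To reach the first form \eqref{eq4_Aw1sum} I would then: (a) evaluate $\sum_{j=0}^{w-d}(-1)^j\binom{w}{j}(w-j)=(-1)^{w-d}w\binom{w-2}{d-2}$ via Lemma~\ref{lem4_Riord} with $v=1$; (b) extend the power sum from $\sum_{j=0}^{w-d}$ to $\sum_{j=0}^{w+1-d}$ by adjoining the $j=w+1-d$ term, picking up a multiple of $(-1)^{w-d}\binom{w}{d-1}$; (c) gather all the resulting $(-1)^{w-d}\binom{w}{d-1}$ terms, which combine into $(-1)^{w-d}(d-1)\binom{w}{d-1}=(-1)^{w-d}w\binom{w-1}{d-2}$, and merge this with the term from (a) through Pascal's rule \eqref{eq2_Riordan_ident0} in the shape $\binom{w-1}{d-2}-\binom{w-2}{d-2}=\binom{w-2}{d-3}$. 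The remaining forms in part~(i) are cosmetic rewrites of \eqref{eq4_Aw1sum}: \eqref{eq4_Aw1sum_2} uses $w\binom{n}{w}=n\binom{n-1}{w-1}$ and the definition \eqref{eq4_Omega} of $\Omega_w^{(1)}(\CC)$; \eqref{eq4_Aw1sum_3} applies Lemma~\ref{lem4_-1}, whose correction term is exactly $\Omega_w^{(0)}(\CC)$; \eqref{eq4_Aw1sum_4} recognizes the surviving sum as $A_w(\CC)$ by \eqref{eq2_wd_MDS}; and \eqref{eq4_Aw1sum_5} just inserts the definitions \eqref{eq4_Omega} of $\Omega_w^{(0)}(\CC)$ and $\Omega_w^{(1)}(\CC)$.

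For part~(ii) I would specialise \eqref{eq4_Aw1sum} to $n=q+1$ and write the prefactor as $q+1$: the summand $q\sum_{j=0}^{w+1-d}(-1)^j\binom{w}{j}q^{w+1-d-j}$ turns into $q^{w+2-d}-\sum_{i=0}^{w-d}(-1)^i\binom{w}{i+1}q^{w+1-d-i}$ after isolating $j=0$ and shifting $i=j-1$, while the summand $\sum_{j=0}^{w+1-d}(-1)^j\binom{w}{j}q^{w+1-d-j}$ turns into $\sum_{i=0}^{w-d}(-1)^i\binom{w}{i}q^{w+1-d-i}-(-1)^{w-d}\binom{w}{d-1}$ after isolating the boundary term $j=w+1-d$; adding the two and leaving $(-1)^{w-d}w\binom{w-2}{d-3}$ untouched yields \eqref{eq4_Aw1sum_q+1}. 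For part~(iii) I would simply set $n=q+1$, $d=5$ in \eqref{eq4_Aw1sum_4} (equivalently \eqref{eq4_Aw1sum_5}); then $n(q-1)=q^2-1$, and comparing the definitions \eqref{eq4_Omega} and \eqref{eq4_Phi} gives $\Omega_w^{(0)}(\CC)-\Omega_w^{(1)}(\CC)=\Phi_w^{(1)}$, which is exactly \eqref{eq4_Aw1sum_q+1_5}.

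The argument is entirely elementary, so the main obstacle is bookkeeping: tracking signs and off-by-one index shifts in steps~(a)--(c) and in part~(ii), where several terms of the form $(-1)^{w-d}\binom{w}{d-1}$ are produced and must be gathered without slip. As a safeguard I would check each closed form at the endpoint $w=d-1$, where the sums $\sum_{j=0}^{w-d}(\cdots)$ are empty and $A_{d-1}(\CC)=0$, and every expression should collapse to $n(q-1)\binom{n-1}{d-1}$ (equivalently $\binom{n}{d-1}(q-1)(n-d+1)$).
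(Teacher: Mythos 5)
Your proposal is correct and follows essentially the same route as the paper: the decomposition $\A_w^{\Sigma}(\V^{(1)})=\A_w^{\Sigma}(\V^{\le1})-A_w(\CC)$, subtraction of \eqref{eq2_wd_MDS} from \eqref{eq4_AwSigma<=1MDS}, Lemma~\ref{lem4_Riord} for the linear-in-$(w-j)$ sum, Lemma~\ref{lem4_-1} and \eqref{eq2_Riordan_ident} for the remaining rewrites, and direct substitution for parts (ii) and (iii). The only (immaterial) difference is bookkeeping in reaching \eqref{eq4_Aw1sum}: the paper extends both sums to the index $w+1-d$ so that the $(n-d+1)\binom{w}{d-1}$ term cancels outright, whereas you extend only the power sum and absorb the leftover via Pascal's rule --- both computations check out.
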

\begin{proof}
\textbf{(i)} By the definition of $\A_w^{\Sigma}(\V^{\le 1})$, see Notation \ref{notation_coset}, for the code $\CC$ of Lemma \ref{lem4_AwSigma<=1MDS}, we have
  \begin{align}\label{eq4_<=&1}
 \A_{w}^{\Sigma}(\V^{(1)})=\A_{w}^{\Sigma}(\V^{\le1})-A_w(\CC).
  \end{align}
We subtract  \eqref{eq2_wd_MDS} from \eqref{eq4_AwSigma<=1MDS} that gives
\begin{align*}
&\A_{w}^{\Sigma}(\V^{(1)})=\binom{n}{w}(q-1)\left[-(-1)^{w-d}\binom{w}{d-1}(n-d+1)\right.\displaybreak[3]\\
&\left.+\sum_{j=0}^{w-d}(-1)^j\binom{w}{j}\left(q^{w-d+1-j}n-w+j\right)\right]\displaybreak[3]\\
&=\binom{n}{w}(q-1)\left[n\sum_{j=0}^{w-d+1}(-1)^j\binom{w}{j}q^{w-d+1-j}-\sum_{j=0}^{w-d+1}(-1)^j\binom{w}{j}(w-j)\right].
 \end{align*}
Here some simple transformations are missed out. Now, for the 2-nd sum $\sum_{j=0}^{w-d+1}\ldots$, we use Lemma \ref{lem4_Riord} and obtain \eqref{eq4_Aw1sum}.

To form \eqref{eq4_Aw1sum_2} from \eqref{eq4_Aw1sum}, we change $w\binom{n}{w}$ by $n\binom{n-1}{w-1}$, see \eqref{eq2_Riordan_ident}.
To obtain \eqref{eq4_Aw1sum_3}  from~\eqref{eq4_Aw1sum_2}, we apply Lemma \ref{lem4_-1}.
For \eqref{eq4_Aw1sum_4}, we use \eqref{eq2_wd_MDS}.
Finally, \eqref{eq4_Aw1sum_5} is \eqref{eq4_Aw1sum_4} in detail.

\textbf{(ii)}  We substitute $n=q+1$ to  \eqref{eq4_Aw1sum} that implies \eqref{eq4_Aw1sum_q+1} after simple transformations.

\textbf{(iii)}  We substitute $n=q+1$ and $d=5$ to  \eqref{eq4_Aw1sum_5} that gives \eqref{eq4_Aw1sum_q+1_5}.
\end{proof}

For $\A_w^{\Sigma}(\V^{\le1})$, we give a formula alternative to \eqref{eq4_AwSigma<=1MDS}.
\begin{corollary}
Let $V_n(1)$ be as in \eqref{eq2_sphere}. Let $\CC$ be an $[n,k,d=n-k+1]_q$ MDS code of minimum distance $d\ge3$.  Then for  $\CC$,  the overall number $\A_{w}^{\Sigma}(\V^{\le1})$ of weight $w$ vectors in all cosets of weight $\le1$ is as follows:
\begin{align}\label{eq4_AwSigma<=1MDS_new}
&\A_w^{\Sigma}(\V^{\le1})=A_w(\CC)\cdot V_n(1)-(-1)^{w-d}n(q-1)\sum_{j=0}^1(-1)^j\binom{n-j}{w-j}\binom{w-j-1}{d-j-2}.
 \end{align}
\end{corollary}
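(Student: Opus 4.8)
The plan is to assemble \eqref{eq4_AwSigma<=1MDS_new} from pieces already available in Theorem \ref{th4_Aw1sum}, avoiding any return to the raw Cheung formulas. The first ingredient is the splitting of the cosets of weight $\le 1$ into the weight-$0$ coset (the code $\CC$ itself, which contributes $A_w(\CC)$ vectors of weight $w$) and the weight-$1$ cosets; this is exactly the relation used at the start of the proof of Theorem \ref{th4_Aw1sum}, namely $\A_{w}^{\Sigma}(\V^{\le1})=A_w(\CC)+\A_{w}^{\Sigma}(\V^{(1)})$.

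Next I substitute the form \eqref{eq4_Aw1sum_4} of the weight-$1$ integral spectrum, $\A_{w}^{\Sigma}(\V^{(1)})=n(q-1)\bigl[A_w(\CC)-\Omega_w^{(0)}(\CC)+\Omega_w^{(1)}(\CC)\bigr]$, to obtain
\[
\A_{w}^{\Sigma}(\V^{\le1})=\bigl(1+n(q-1)\bigr)A_w(\CC)+n(q-1)\bigl(\Omega_w^{(1)}(\CC)-\Omega_w^{(0)}(\CC)\bigr).
\]
Then two identifications finish the proof. By \eqref{eq2_sphere} with $t=1$ we have $V_n(1)=1+n(q-1)$, so the first term equals $A_w(\CC)\cdot V_n(1)$; and by the definition \eqref{eq4_Omega},
\[
\Omega_w^{(1)}(\CC)-\Omega_w^{(0)}(\CC)=(-1)^{w-d}\binom{n-1}{w-1}\binom{w-2}{d-3}-(-1)^{w-d}\binom{n}{w}\binom{w-1}{d-2}=-(-1)^{w-d}\sum_{j=0}^{1}(-1)^j\binom{n-j}{w-j}\binom{w-j-1}{d-j-2}.
\]
Plugging these in yields \eqref{eq4_AwSigma<=1MDS_new}.

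I expect no real obstacle: this is a one-line rearrangement once \eqref{eq4_Aw1sum_4} is in hand. The only points deserving a glance are the sign bookkeeping when folding the two $\Omega$-terms into the single alternating sum over $j\in\{0,1\}$, and the behaviour of the binomial coefficients at the endpoint $w=d-1$ (there $\binom{w-1}{d-2}=\binom{w-2}{d-3}=1$, and one checks the displayed identity still holds, the convention on empty sums in \eqref{eq2_wd_MDS} and \eqref{eq4_Aw1sum} making everything consistent). Alternatively one could read the bracketed expression off directly from \eqref{eq4_Aw1sum_5}, which already exhibits the two binomial products; routing through \eqref{eq4_Aw1sum_4} together with \eqref{eq4_Omega} is merely the shortest path.
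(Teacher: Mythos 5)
Your proof is correct and follows essentially the same route as the paper, which likewise combines the decomposition $\A_{w}^{\Sigma}(\V^{\le1})=A_w(\CC)+\A_{w}^{\Sigma}(\V^{(1)})$ from \eqref{eq4_<=&1} with the explicit form of $\A_{w}^{\Sigma}(\V^{(1)})$ (the paper cites \eqref{eq4_Aw1sum_5}, which is just \eqref{eq4_Aw1sum_4} written out, so your use of the $\Omega$-notation is an immaterial difference). The sign bookkeeping and the identification $V_n(1)=1+n(q-1)$ are both handled correctly.
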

\begin{proof}
  We use \eqref{eq4_<=&1} and \eqref{eq4_Aw1sum_5}.
\end{proof}

\section{The integral weight spectrum of the weight 2 cosets of MDS codes with minimum distance $d\ge5$} \label{sec_wd2}

As well as in Lemma \ref{lem4_AwSigma<=1MDS}, we use the results of \cite{CheungIEEE1989} with some transformations.
\begin{lemma}
\label{lem5_AwSigma<=2MDS}
\emph{\cite[Eqs.\,(11)--(13)]{CheungIEEE1989}}
Let $d-2\le w\le n$.  Let $V_n(t)$ be as in \eqref{eq2_sphere}. For an $[n,k,d=n-k+1]_q$ MDS code $\CC$ of minimum distance $d\ge5$,  the overall number $\A_{w}^{\Sigma}(\V^{\le2})$ of weight $w$ vectors in all cosets of weight $\le2$ is as follows:
\begin{align}\label{eq5_AwSigma<=2MDS}
&\A_w^{\Sigma}(\V^{\le2})=\binom{n}{w}\left[\sum_{j=0}^{w-d}(-1)^j\binom{w}{j}
\left[q^{w-d+1-j}\cdot V_n(2)-V_{w-j}(2)\right]\right.\\
&\left.-(-1)^{w-d}\frac{(n-d+1)(q-1)}{2}\left(\binom{w}{d-1}[2+(q-1)(n+d-2)]
-\binom{w}{d-2}(n-d+2)\right)\right].\notag
 \end{align}
 \end{lemma}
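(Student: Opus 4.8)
The plan is to follow the same template as the proof of Lemma \ref{lem4_AwSigma<=1MDS}: start from the general Cheung formulas \eqref{eq2_Cheung}--\eqref{eq2_Cheung_3}, specialize to $t=2$, and carry out the ensuing algebraic simplifications, finally renaming $u$ by $w$. Concretely, I would set $t=2$ in \eqref{eq2_Cheung}, so that $\A_w^\Sigma(\V^{\le 2})=\binom{n}{w}\sum_{j=0}^{w-d+2}(-1)^jN_j$, where the summation range splits into the ``regular'' part $0\le j\le w-d$ governed by \eqref{eq2_Cheung_2} and the two extra terms $j=w-d+1$ and $j=w-d+2$ governed by \eqref{eq2_Cheung_3}. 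In the regular part, \eqref{eq2_Cheung_2} with $t=2$ reads $N_j=\binom{w}{j}\big[q^{w-d+1-j}V_n(2)-\sum_{i=0}^{2}\binom{w-j}{i}(q-1)^i\big]$, and recognizing $\sum_{i=0}^{2}\binom{w-j}{i}(q-1)^i=V_{w-j}(2)$ by \eqref{eq2_sphere} already produces the first bracketed sum in \eqref{eq5_AwSigma<=2MDS}.

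The heart of the computation is the two boundary terms. For $j=w-d+1$ the inner index $w$ in \eqref{eq2_Cheung_3} (the author's $w$, a dummy distinct from our final $w$) runs over $d-u+j\le w\le 2$, i.e. only $w=1,2$; for $j=w-d+2$ it forces $w=2$ alone. So I would expand \eqref{eq2_Cheung_3} explicitly for these few cases: each contributes a product of a small MDS-type inner sum $\sum_{i=0}^{\cdot}(-1)^i\binom{w}{i}(q^{\cdot}-1)$ (which for $w=1,2$ collapses to elementary expressions in $q$), a binomial $\binom{n-u+j}{w}$, and the tail factor $\sum_{s=w}^{2}\binom{u-j}{s-w}(q-1)^{s-w}$. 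Collecting the $j=w-d+1$ and $j=w-d+2$ contributions with their signs $(-1)^{w-d+1}$ and $(-1)^{w-d+2}$, and substituting $u=w$, $u-j=d-1$ or $d-2$, $n-u+j=n-d+1$ or $n-d+2$, I expect everything to consolidate into the single $(-1)^{w-d}$ term displayed in \eqref{eq5_AwSigma<=2MDS}, with the factor $\tfrac{(n-d+1)(q-1)}{2}$ emerging from a $\binom{n-d+2}{2}$ appearing in one of the binomials.

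The main obstacle will be precisely this bookkeeping of the two boundary terms: keeping straight the two meanings of ``$w$'' (the dummy in \eqref{eq2_Cheung_3} versus the external weight), tracking the signs through $(-1)^{w-d+1}$ and $(-1)^{w-d+2}$, and correctly evaluating the nested sums $\sum_i(-1)^i\binom{w}{i}(q^{w-d+u-j-i+1}-1)$ for the small values $w\in\{1,2\}$ — e.g. for $w=2$ this is $(q^{m}-1)-2(q^{m-1}-1)+(q^{m-2}-1)$ for the relevant exponent $m$, which must be simplified and then matched against the target expression. I would verify the algebra by checking the smallest nontrivial case (say $d=5$, $w=d-2=3$, where the regular sum is empty and only the boundary terms survive, giving $\A_3^\Sigma(\V^{\le 2})=\binom{n}{3}(q-1)^2\binom{n-2}{2}$, the number of weight-$3$ coset leaders times nothing else) and, as a further sanity check, confirming consistency with \eqref{eq4_AwSigma<=1MDS} in the sense that the structure of the ``correction'' term parallels the $t=1$ case. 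Once \eqref{eq5_AwSigma<=2MDS} is assembled, the proof closes by noting $d-2\le w\le n$ is exactly the range $d-t\le u\le n$ from \eqref{eq2_Cheung} with $t=2$.
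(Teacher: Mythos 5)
Your plan is exactly the paper's proof of Lemma \ref{lem5_AwSigma<=2MDS}: put $t=2$ in \eqref{eq2_Cheung}--\eqref{eq2_Cheung_3}, note that the boundary indices $j=u-d+1$ and $j=u-d+2$ force the inner dummy of \eqref{eq2_Cheung_3} to take the values $1,2$ and $2$ respectively, identify $\sum_{i=0}^{2}\binom{u-j}{i}(q-1)^i=V_{u-j}(2)$ in \eqref{eq2_Cheung_2}, simplify, and rename $u$ as $w$. One correction to your proposed sanity check: for $d=5$, $w=3$ the surviving term is $j=0=u-d+2$ (inner dummy $2$ only), and \eqref{eq5_AwSigma<=2MDS} gives $\A_3^{\Sigma}(\V^{\le2})=\binom{n}{3}(q-1)\binom{n-3}{2}=\binom{5}{3}A_5(\CC)$, i.e.\ the weight-$3$ vectors obtained by zeroing two nonzero coordinates of a weight-$5$ codeword (these lie in weight-$2$ cosets and are not coset leaders), not $\binom{n}{3}(q-1)^2\binom{n-2}{2}$; testing against that wrong target would falsely suggest an algebra error.
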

\begin{proof}
  In the relations for $D_u$ of \cite{CheungIEEE1989} cited by \eqref{eq2_Cheung}--\eqref{eq2_Cheung_3}, we put $t=2$ that gives, in \eqref{eq2_Cheung_3}, $j=u-d+1$ and $j=u-d+2$, whence $w=1,2$ and $w=2$, respectively.  Then we do simple transformations.  Finally, we change $u$ by $w$ to save the notations of this paper.
\end{proof}

For an $[n,k,d]_q$ code $\CC$, we denote
\begin{align}
&\Delta_w(\CC)=(-1)^{w-d}\binom{n}{w}\binom{w}{d-2}\binom{n-d+2}{2}(q-1);\displaybreak[3]\label{eq5_Delta}\\
&\Delta_w^\star(\CC)=\frac{\Delta_w(\CC)}{\binom{n}{2}(q-1)^2}.\notag
\end{align}
\begin{lemma} \label{lem5_DeltaStar}
The following holds:
\begin{align}\label{eq5_DeltaStar}
  \Delta_w^\star(\CC)=(-1)^{w-d}\binom{n-d+2}{n-w}\binom{n-2}{d-2}\frac{1}{q-1}.
\end{align}
\end{lemma}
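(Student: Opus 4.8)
The plan is to prove \eqref{eq5_DeltaStar} by a direct manipulation of binomial coefficients, starting from the definition $\Delta_w^\star(\CC)=\Delta_w(\CC)/\left(\binom{n}{2}(q-1)^2\right)$ and the explicit form of $\Delta_w(\CC)$ in \eqref{eq5_Delta}. Substituting, the factor $(q-1)$ from $\Delta_w(\CC)$ cancels one power of $(q-1)$ in the denominator, leaving the overall scalar $(-1)^{w-d}/(q-1)$ together with the purely combinatorial quantity
\[
\frac{\binom{n}{w}\binom{w}{d-2}\binom{n-d+2}{2}}{\binom{n}{2}}.
\]
So the entire content of the lemma reduces to the binomial identity
\[
\frac{\binom{n}{w}\binom{w}{d-2}\binom{n-d+2}{2}}{\binom{n}{2}}=\binom{n-d+2}{n-w}\binom{n-2}{d-2}.
\]

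First I would rewrite $\binom{n}{w}\binom{w}{d-2}$ using \eqref{eq2_Riordan_ident} with $m=w$, $p=d-2$, obtaining $\binom{n}{d-2}\binom{n-d+2}{w-d+2}$. Next I would combine the factor $\binom{n-d+2}{2}$ with $\binom{n}{2}$ in the denominator: applying \eqref{eq2_Riordan_ident} to $\binom{n}{2}\binom{2}{0}$ is trivial, so instead I would handle the product $\binom{n}{d-2}\binom{n-d+2}{2}$ by \eqref{eq2_Riordan_ident} in the form $\binom{n}{m-p}\binom{n-m+p}{p}$ with a suitable choice, or more cleanly regroup as $\binom{n}{2}\binom{n-2}{d-2}$ (which is exactly \eqref{eq2_Riordan_ident} with $m=d$, $p=2$ giving $\binom{n}{d-2}\binom{n-d+2}{2}=\binom{n}{2}\binom{n-2}{d-2}$). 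That cancels the $\binom{n}{2}$ in the denominator and leaves $\binom{n-2}{d-2}\binom{n-d+2}{w-d+2}$. Finally I would note $\binom{n-d+2}{w-d+2}=\binom{n-d+2}{(n-d+2)-(w-d+2)}=\binom{n-d+2}{n-w}$ by the symmetry of binomial coefficients, which yields the claimed right-hand side $(-1)^{w-d}\binom{n-d+2}{n-w}\binom{n-2}{d-2}\frac{1}{q-1}$.

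I do not expect any real obstacle here; the only points needing mild care are that the applications of \eqref{eq2_Riordan_ident} require the arguments to be in the ranges where the identity holds as written (which is guaranteed by $d-2\le w\le n$ and $d\ge5$, so that $w-d+2\ge0$, $n-d+2\ge2$, etc.), and that the reflection $\binom{n-d+2}{w-d+2}=\binom{n-d+2}{n-w}$ is valid for all integer arguments with the usual convention that $\binom{a}{b}=0$ when $b<0$ or $b>a$. Beyond verifying these bookkeeping conditions, the proof is a two-line regrouping of binomial coefficients.
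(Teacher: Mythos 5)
Your proposal is correct and follows essentially the same route as the paper: both reduce the claim to the binomial identity $\binom{n}{w}\binom{w}{d-2}\binom{n-d+2}{2}=\binom{n}{2}\binom{n-2}{d-2}\binom{n-d+2}{n-w}$ and establish it by two applications of \eqref{eq2_Riordan_ident} (to $\binom{n}{w}\binom{w}{d-2}$ and to $\binom{n}{d-2}\binom{n-d+2}{2}$) together with the symmetry $\binom{n-d+2}{w-d+2}=\binom{n-d+2}{n-w}$. No gaps; your remark on the validity ranges of the identities is a harmless extra precaution.
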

\begin{proof}
  By \eqref{eq2_Riordan_ident}, we have
  \begin{align*}
&\binom{n}{w}\binom{w}{d-2}=\binom{n}{d-2}\binom{n-d+2}{w-d-2}=\binom{n}{d-2}\binom{n-d+2}{n-w},\displaybreak[3]\\
&\binom{n}{d-2}\binom{n-d+2}{2}=\binom{n}{d}\binom{d}{d-2}=\binom{n}{d}\binom{d}{2}=\binom{n}{2}\binom{n-2}{d-2}.
  \end{align*}
\end{proof}
\begin{theorem}\label{th5_Aw2Sigma}\emph{\textbf{(integral weight spectrum 2)}}

 Let $d-2\le w\le n$.
Let $\CC$ be an $[n,k,d=n-k+1]_q$ MDS code of minimum distance $d\ge5$.  Let $\Omega_w^{(j)}(\CC)$  and $\Phi_w^{(j)}$ be as in \eqref{eq4_Omega} and \eqref{eq4_Phi}.

\textbf{\emph{(i)}} For the code~$\CC$, the overall number $\A_{w}^{\Sigma}(\V^{(2)})$ of weight $w$ vectors in all weight $2$ cosets is as follows:
\begin{align}\label{eq5_AwSigma2}
&\A_{w}^{\Sigma}(\V^{(2)})=\binom{n}{w}(q-1)^2\left[\binom{n}{2}\sum_{j=0}^{w+1-d}(-1)^j\binom{w}{j}q^{w+1-d-j}+(-1)^{w-d}\binom{w}{2}\binom{w-3}{d-4}\right]\displaybreak[3]\\
&\phantom{\A_{w}^{\Sigma}(\V^{(2)})=}+\Delta_w(\CC).\displaybreak[3]\notag\\
&=\binom{n}{2}(q-1)^2\left[\binom{n}{w}\sum_{j=0}^{w+1-d}(-1)^j\binom{w}{j}q^{w+1-d-j}+\Omega_w^{(2)}(\CC)\right]+\Delta_w(\CC).\label{eq5_AwSigma2_2}\displaybreak[3]\\
&=\binom{n}{2}(q-1)^2\left[\binom{n}{w}\sum_{j=0}^{w-d}(-1)^j\binom{w}{j}\left(q^{w+1-d-j}-1\right)-\Omega_w^{(0)}(\CC)+\Omega_w^{(2)}(\CC)\right]
+\Delta_w(\CC)\label{eq5_AwSigma2_3}\displaybreak[3]\\
&=\binom{n}{2}(q-1)^2\left[A_w(\CC)-\Omega_w^{(0)}(\CC)+\Omega_w^{(2)}(\CC)\right]
+\binom{n}{2}(q-1)^2\Delta_w^\star(\CC)\label{eq5_AwSigma2_4}\displaybreak[3]\\
&=\binom{n}{2}(q-1)^2\left[A_w(\CC)-(-1)^{w-d}\left(\binom{n}{w}\binom{w-1}{d-2}-\binom{n-2}{w-2}\binom{w-3}{d-4}\right)\right]\label{eq5_AwSigma2_5}\\
&+(-1)^{w-d}\binom{n}{2}(q-1)\binom{n-d+2}{n-w}\binom{n-2}{d-2}.\notag
\end{align}

\textbf{\emph{(ii)}} Let the code $\CC$ be a $[q+1,q-3,5]_q$ MDS code of length $n=q+1$ and minimum distance  $d=5$. For $\CC$, the overall number $\A_{w}^{\Sigma}(\V^{(1)})$ of weight $w$ vectors in all weight $1$ cosets is as follows
\begin{align}\label{eq5_AwSigma2_q+1_5}
&\A_{w}^{\Sigma}(\V^{(2)})=\binom{q+1}{2}(q-1)^2\left[A_w(\CC)-\Phi_w^{(2)}+(-1)^{w-5}\,\frac{1}{3}\binom{q-2}{w-3}\binom{q-2}{2}\right],\displaybreak[3]\\
&\phantom{\A_{w}^{\Sigma}(\V^{(2)})=}3\le w\le q+1. \notag
\end{align}
\end{theorem}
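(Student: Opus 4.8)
The plan is to mirror the proof of Theorem~\ref{th4_Aw1sum}, now starting from $\A_{w}^{\Sigma}(\V^{(2)})=\A_{w}^{\Sigma}(\V^{\le2})-\A_{w}^{\Sigma}(\V^{\le1})$, with the first term given by Lemma~\ref{lem5_AwSigma<=2MDS}, equation~\eqref{eq5_AwSigma<=2MDS}, and the second by Lemma~\ref{lem4_AwSigma<=1MDS}, equation~\eqref{eq4_AwSigma<=1MDS} (we use~\eqref{eq4_AwSigma<=1MDS} also at $w=d-2$, where it equals $0$, which is correct since no vector of weight $d-2$ can lie in a coset of weight $\le1$). First I would rewrite the summand of~\eqref{eq4_AwSigma<=1MDS} using $1+n(q-1)=V_n(1)$ and $1+(w-j)(q-1)=V_{w-j}(1)$, so that both lemmas take the common form $\binom{n}{w}\bigl[\sum_{j=0}^{w-d}(-1)^j\binom{w}{j}\bigl(q^{w-d+1-j}V_n(t)-V_{w-j}(t)\bigr)+(\text{boundary term})\bigr]$ with $t=2$, resp. $t=1$. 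Subtracting and using $V_n(2)-V_n(1)=(q-1)^2\binom{n}{2}$ and $V_{w-j}(2)-V_{w-j}(1)=(q-1)^2\binom{w-j}{2}$, the ``sum part'' of $\A_{w}^{\Sigma}(\V^{(2)})$ collapses to $(q-1)^2\binom{n}{w}\sum_{j=0}^{w-d}(-1)^j\binom{w}{j}\bigl[q^{w-d+1-j}\binom{n}{2}-\binom{w-j}{2}\bigr]$.

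Next I would deal with the two boundary terms. Their difference factors as $-(-1)^{w-d}\binom{n}{w}(n-d+1)(q-1)\bigl[\frac{1}{2}\binom{w}{d-1}(q-1)(n+d-2)-\frac{1}{2}\binom{w}{d-2}(n-d+2)\bigr]$. Its part linear in $(q-1)$ equals $(-1)^{w-d}\binom{n}{w}\binom{w}{d-2}\binom{n-d+2}{2}(q-1)=\Delta_w(\CC)$ by $\frac{1}{2}(n-d+1)(n-d+2)=\binom{n-d+2}{2}$ and~\eqref{eq5_Delta}; its part quadratic in $(q-1)$ is $-(-1)^{w-d}\binom{n}{w}(q-1)^2\frac{1}{2}(n-d+1)(n+d-2)\binom{w}{d-1}$. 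The key step is that, because $\binom{w}{w-d+1}=\binom{w}{d-1}$, $\binom{d-1}{2}=\binom{w-(w-d+1)}{2}$ and, crucially, $\binom{n}{2}-\binom{d-1}{2}=\frac{1}{2}(n-d+1)(n+d-2)$, this quadratic boundary piece is exactly the $j=w-d+1$ contribution needed to raise the upper limit of both sums from $w-d$ to $w+1-d$. Hence $\A_{w}^{\Sigma}(\V^{(2)})=\binom{n}{w}(q-1)^2\bigl[\binom{n}{2}\sum_{j=0}^{w+1-d}(-1)^j\binom{w}{j}q^{w+1-d-j}-\sum_{j=0}^{w+1-d}(-1)^j\binom{w}{j}\binom{w-j}{2}\bigr]+\Delta_w(\CC)$, and applying Lemma~\ref{lem4_Riord} with $v=2$, $m=w+1-d$ (and $\binom{w-3}{w+1-d}=\binom{w-3}{d-4}$) turns $-\sum$ into $(-1)^{w-d}\binom{w}{2}\binom{w-3}{d-4}$, which is~\eqref{eq5_AwSigma2}.

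The remaining forms of part~\textbf{(i)} are rewritings. For~\eqref{eq5_AwSigma2_2} I would use~\eqref{eq2_Riordan_ident} as $\binom{n}{w}\binom{w}{2}=\binom{n}{2}\binom{n-2}{w-2}$, so the last summand becomes $\binom{n}{2}(q-1)^2\Omega_w^{(2)}(\CC)$ via~\eqref{eq4_Omega}. For~\eqref{eq5_AwSigma2_3} I would apply Lemma~\ref{lem4_-1} to convert $\sum_{j=0}^{w+1-d}(-1)^j\binom{w}{j}q^{w+1-d-j}$ into $\sum_{j=0}^{w-d}(-1)^j\binom{w}{j}(q^{w+1-d-j}-1)-(-1)^{w-d}\binom{w-1}{d-2}$, noting $\binom{n}{w}(-1)^{w-d}\binom{w-1}{d-2}=\Omega_w^{(0)}(\CC)$. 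For~\eqref{eq5_AwSigma2_4} I would recognize $\binom{n}{w}\sum_{j=0}^{w-d}(-1)^j\binom{w}{j}(q^{w+1-d-j}-1)=A_w(\CC)$ by~\eqref{eq2_wd_MDS} and $\Delta_w(\CC)=\binom{n}{2}(q-1)^2\Delta_w^\star(\CC)$ by definition; and~\eqref{eq5_AwSigma2_5} follows by inserting the explicit $\Omega_w^{(0)}(\CC),\Omega_w^{(2)}(\CC)$ from~\eqref{eq4_Omega} and $\Delta_w^\star(\CC)$ from Lemma~\ref{lem5_DeltaStar}, equation~\eqref{eq5_DeltaStar}. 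For part~\textbf{(ii)} I would put $n=q+1$, $d=5$ in~\eqref{eq5_AwSigma2_5}: then $\binom{n}{w}\binom{w-1}{d-2}-\binom{n-2}{w-2}\binom{w-3}{d-4}=\binom{q+1}{w}\binom{w-1}{3}-\binom{q-1}{w-2}\binom{w-3}{1}$ is $(-1)^{w-5}\Phi_w^{(2)}$ by~\eqref{eq4_Phi}, while the last line of~\eqref{eq5_AwSigma2_5} becomes $(-1)^{w-5}\binom{q+1}{2}(q-1)\binom{q-2}{q+1-w}\binom{q-1}{3}$, which equals $(-1)^{w-5}\binom{q+1}{2}(q-1)^2\frac{1}{3}\binom{q-2}{w-3}\binom{q-2}{2}$ after $\binom{q-2}{q+1-w}=\binom{q-2}{w-3}$ and $(q-1)\binom{q-1}{3}=(q-1)^2\frac{1}{3}\binom{q-2}{2}$; pulling out $\binom{q+1}{2}(q-1)^2$ gives~\eqref{eq5_AwSigma2_q+1_5}.

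I expect the main obstacle to be the second paragraph: across the two different radii $t=1,2$ one must separate in each boundary term the portion linear in $(q-1)$ (which builds $\Delta_w(\CC)$) from the portion quadratic in $(q-1)$, and then verify the identity $\binom{n}{2}-\binom{d-1}{2}=\frac{1}{2}(n-d+1)(n+d-2)$ together with the index shifts $\binom{w}{w-d+1}=\binom{w}{d-1}$ and $\binom{w-3}{w+1-d}=\binom{w-3}{d-4}$, so that the quadratic boundary piece merges cleanly into the two extended sums. Everything else is the routine bookkeeping that the source calls ``some simple transformations missed out''.
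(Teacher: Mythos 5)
Your proposal is correct and follows essentially the same route as the paper: subtract \eqref{eq4_AwSigma<=1MDS} from \eqref{eq5_AwSigma<=2MDS}, absorb the quadratic-in-$(q-1)$ part of the boundary terms into the sums via $\binom{n}{2}-\binom{d-1}{2}=\frac{1}{2}(n-d+1)(n+d-2)$, identify the linear part with $\Delta_w(\CC)$, apply Lemma~\ref{lem4_Riord}, and then derive the remaining forms by Lemmas \ref{lem4_-1} and \ref{lem5_DeltaStar} and Eq.~\eqref{eq2_wd_MDS}. The only (harmless) difference is organizational — you extend both sums to $j=w+1-d$ before invoking Lemma~\ref{lem4_Riord}, whereas the paper applies it at $m=w-d$ and reconciles the leftover $\binom{w}{d-1}\binom{d-1}{2}$ term afterwards — and you usefully make explicit the $w=d-2$ boundary case that the paper leaves tacit.
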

\begin{proof}
\textbf{(i)}
By the definition of $\A_w^{\Sigma}(\V^{\le 2})$, see Notation \ref{notation_coset}, for the code $\CC$ of Lemma \ref{lem5_AwSigma<=2MDS}, we have
  \begin{align}\label{eq5_extract}
 \A_{w}^{\Sigma}(\V^{(2)})=\A_{w}^{\Sigma}(\V^{\le2})-\A_{w}^{\Sigma}(\V^{\le1}).
  \end{align}
We subtract  \eqref{eq4_AwSigma<=1MDS} from \eqref{eq5_AwSigma<=2MDS} that gives
\begin{align*}
&\A_{w}^{\Sigma}(\V^{(2)})=\binom{n}{w}\left[\sum_{j=0}^{w-d}(-1)^j\binom{w}{j}\left(q^{w+1-d-j}\binom{n}{2}(q-1)^2-\binom{w-j}{2}(q-1)^2\right)\right.\displaybreak[3]\\
&\left.+(-1)^{w+1-d}\binom{w}{d-1}\frac{1}{2}(n-d+1)(q-1)^2(n+d-2)\right]+\Delta_w(\CC)\displaybreak[3]\\
&=\binom{n}{w}(q-1)^2\left[\binom{n}{2}\sum_{j=0}^{w-d}(-1)^j\binom{w}{j}q^{w+1-d-j}-\sum_{j=0}^{w-d}(-1)^j\binom{w}{j}\binom{w-j}{2}\right.\displaybreak[3]\\
&\left.-(-1)^{w-d}\binom{w}{d-1}\left(\frac{1}{2}(n-d+1)(n+d-2)+\binom{n}{2}-\binom{n}{2}\right)\right]+\Delta_w(\CC).
\end{align*}
Applying  Lemma \ref{lem4_Riord} to the 2-nd sum $\sum_{j=0}^{w-d}\ldots$, after simple transformations we obtain
\begin{align*}
&\A_{w}^{\Sigma}(\V^{(2)})=\binom{n}{w}(q-1)^2\left[\binom{n}{2}\sum_{j=0}^{w+1-d}(-1)^j\binom{w}{j}q^{w+1-d-j}-(-1)^{w-d}\binom{w}{2}\binom{w-3}{w-d}\right.\displaybreak[3]\\
&\left.+(-1)^{w-d}\binom{w}{d-1}\binom{d-1}{2}\right]+\Delta_w(\CC).
\end{align*}
Due to \eqref{eq2_Riordan_ident} and \eqref{eq2_Riordan_ident0}, we have
$$\binom{w}{d-1}\binom{d-1}{2}=\binom{w}{2}\binom{w-2}{d-3}=\binom{w}{2}\left[\binom{w-3}{d-4}+\binom{w-3}{d-3}\right].$$ Also, $\binom{w-3}{w-d}=\binom{w-3}{d-3}$. Now we can obtain \eqref{eq5_AwSigma2}. Moreover, by \eqref{eq2_Riordan_ident}, we have $$\binom{n}{w}\binom{w}{2}=\binom{n}{2}\binom{n-2}{w-2}$$ that gives \eqref{eq5_AwSigma2_2}.

To obtain \eqref{eq5_AwSigma2_3} from \eqref{eq5_AwSigma2_2}, we apply Lemma \ref{lem4_-1}.
For \eqref{eq5_AwSigma2_4}, we use \eqref{eq2_wd_MDS}.
Finally, \eqref{eq5_AwSigma2_5} is \eqref{eq5_AwSigma2_4} in detail.

\textbf{(ii)} We substitute $n=q+1$ and $d=5$ to  \eqref{eq5_AwSigma2_5} that gives \eqref{eq5_AwSigma2_q+1_5}.
\end{proof}

\section{The integral weight spectrum of the weight 3 cosets of MDS codes with minimum distance $d=5$ and covering radius $R=3$} \label{sec_wd3}

\begin{theorem}\label{th6_integr_3} \emph{\textbf{(integral weight spectrum 3)}}

 Let $d-2\le w\le n$.  Let $\CC$ be an $[n,n-4,5]_q3$ MDS code of minimum distance $d=5$ and covering radius $R=3$. Let $V_n(t)$, $\Phi_w^{(j)}$, $\A_w^{\Sigma}(\V^{\le2})$, and $\Delta_w(\CC)$ be as in \eqref{eq2_sphere}, \eqref{eq4_Phi}, \eqref{eq5_AwSigma<=2MDS}, and \eqref{eq5_Delta}, respectively. Let
 $\A_{w}^{\Sigma}(\V^{(1)})$ and $\A_{w}^{\Sigma}(\V^{(2)})$ be as in Theorems \emph{\ref{th4_Aw1sum}} and \emph{\ref{th5_Aw2Sigma}}, respectively.

\textbf{\emph{(i)}} For  the code~$\CC$,  the overall number $\A_{w}^{\Sigma}(\V^{3})$ of weight $w$ vectors in all cosets of weight~$3$ is as follows:
 \begin{align}\label{eq6_wd3_gen}
 &\A_w^{\Sigma}(\V^{(3)})=\binom{n}{w}(q-1)^w-\A_w^{\Sigma}(\V^{\le2})\displaybreak[3]\\
 &=\binom{n}{w}(q-1)^w-\left[A_w(\CC)+\A_w^{\Sigma}(\V^{(1)})+\A_w^{\Sigma}(\V^{(2)})\right]\displaybreak[3]\label{eq6_wd3_a}\\
 &=\binom{n}{w}(q-1)^w-\left[\binom{n}{w}\sum_{j=0}^{w-5}(-1)^j\binom{w}{j}
\left[q^{w-4-j}\cdot V_n(2)-V_{w-j}(2)\right]\right.\label{eq6_wd3_b}\\
&\left.-(-1)^{w-5}\frac{(n-4)(q-1)}{2}\left(\binom{w}{4}[2+(q-1)(n+3)]
-\binom{w}{3}(n-3)\right)\right].\notag
   \end{align}

\textbf{\emph{(ii)}} Let the code $\CC$ be a $[q+1,q-3,5]_q3$ MDS code of length $n=q+1$, minimum distance  $d=5$, and covering radius $R=3$. For $\CC$, the overall number $\A_{w}^{\Sigma}(\V^{(3)})$ of weight $w$ vectors in all weight $3$ cosets is as follows
\begin{align}\label{eq6_wd3_q+1}
&\A_w^{\Sigma}(\V^{(3)})=\binom{q+1}{w}(q-1)^w-\left[\binom{q+1}{w}\sum_{j=0}^{w-5}(-1)^j\binom{w}{j}
\left[q^{w-4-j}\cdot V_{q+1}(2)-V_{w-j}(2)\right]\right.\\
&\left.-(-1)^{w-5}\frac{(q-3)(q-1)}{2}\left(\binom{w}{4}(q^2+3q-2)
-\binom{w}{3}(q-2)\right)\right]\notag\\
 &=\binom{q+1}{w}(q-1)^w-\left[V_{q+1}(2)A_w(\CC)-(q^2-1)\Phi_w^{(1)}-\binom{q+1}{2}(q-1)^2\Phi_w^{(2)}-\Delta_w(\CC)\right].\label{eq6_wd3_q+1_b}
\end{align}
\end{theorem}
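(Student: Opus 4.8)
The plan is to exploit the hypothesis $R=3$ to make a simple double count. Since the covering radius of $\CC$ equals $3$, no coset has weight exceeding $3$, so the cosets of $\CC$ fall exactly into the classes of weight $0,1,2,3$, and these classes partition $\F_q^n$. Counting the weight $w$ vectors of $\F_q^n$ in two ways — there are $\binom{n}{w}(q-1)^w$ of them, and each lies in exactly one coset — gives $\binom{n}{w}(q-1)^w=\A_w^{\Sigma}(\V^{\le3})=A_w(\CC)+\A_w^{\Sigma}(\V^{(1)})+\A_w^{\Sigma}(\V^{(2)})+\A_w^{\Sigma}(\V^{(3)})$, where the weight $0$ coset is $\CC$ itself and hence contributes $A_w(\CC)$. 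Rearranging, and using $A_w(\CC)+\A_w^{\Sigma}(\V^{(1)})+\A_w^{\Sigma}(\V^{(2)})=\A_w^{\Sigma}(\V^{\le2})$, immediately yields both \eqref{eq6_wd3_gen} and \eqref{eq6_wd3_a}. For \eqref{eq6_wd3_b} one then merely substitutes into \eqref{eq6_wd3_gen} the closed form of $\A_w^{\Sigma}(\V^{\le2})$ supplied by Lemma \ref{lem5_AwSigma<=2MDS}, specialised to $d=5$; no further manipulation is needed.

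For part (ii): equation \eqref{eq6_wd3_q+1} comes from \eqref{eq6_wd3_b} via the substitutions $n=q+1$, $d=5$, using $2+(q-1)(n+3)=2+(q-1)(q+4)=q^2+3q-2$, together with $n-4=q-3$ and $n-3=q-2$. Equation \eqref{eq6_wd3_q+1_b} is obtained instead from \eqref{eq6_wd3_a} with $n=q+1$: I would insert the explicit expressions $\A_w^{\Sigma}(\V^{(1)})=(q^2-1)\left[A_w(\CC)-\Phi_w^{(1)}\right]$ from Theorem \ref{th4_Aw1sum}(iii) and $\A_w^{\Sigma}(\V^{(2)})=\binom{q+1}{2}(q-1)^2\left[A_w(\CC)-\Phi_w^{(2)}\right]+\Delta_w(\CC)$ from Theorem \ref{th5_Aw2Sigma}(ii), then collect the $A_w(\CC)$-terms using the identity $1+(q^2-1)+\binom{q+1}{2}(q-1)^2=V_{q+1}(2)$, which is immediate from \eqref{eq2_sphere}. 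The remaining combinatorial term in Theorem \ref{th5_Aw2Sigma}(ii), namely $\binom{q+1}{2}(q-1)^2(-1)^{w-5}\frac{1}{3}\binom{q-2}{w-3}\binom{q-2}{2}$, is recognised as $\Delta_w(\CC)$ by means of \eqref{eq2_Riordan_ident} (specifically $\binom{q+1}{w}\binom{w}{3}=\binom{q+1}{3}\binom{q-2}{w-3}$, combined with $\binom{q+1}{3}(q-1)=\frac{1}{3}\binom{q+1}{2}(q-1)^2$).

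Every ingredient — the closed forms of $A_w(\CC)$, $\A_w^{\Sigma}(\V^{\le2})$, $\A_w^{\Sigma}(\V^{(1)})$ and $\A_w^{\Sigma}(\V^{(2)})$ — is already in hand from Sections \ref{sec_wd1}--\ref{sec_wd2}, so there is no genuine obstacle: the proof is pure assembly. The one place where care is required is the opening observation that $R=3$ leaves no cosets of weight larger than $3$, which is exactly what makes the double count exhaustive; after that, the only real work is the sign- and binomial-bookkeeping that brings the combined expression into precisely the stated form \eqref{eq6_wd3_q+1_b}, in particular extracting the sphere volume $V_{q+1}(2)$ as the coefficient of $A_w(\CC)$ and identifying the leftover correction term as $\Delta_w(\CC)$.
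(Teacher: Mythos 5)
Your proof is correct and follows essentially the same route as the paper's: the double count based on $R=3$ that yields \eqref{eq6_wd3_gen} and \eqref{eq6_wd3_a}, substitution of Lemma \ref{lem5_AwSigma<=2MDS} with $d=5$ for \eqref{eq6_wd3_b}, and substitution of Theorems \ref{th4_Aw1sum}(iii) and \ref{th5_Aw2Sigma}(ii) for \eqref{eq6_wd3_q+1_b}, with exactly the identifications $1+(q^2-1)+\binom{q+1}{2}(q-1)^2=V_{q+1}(2)$ and of the residual term of Theorem \ref{th5_Aw2Sigma}(ii) as $\Delta_w(\CC)$. One caveat: your (correct) bookkeeping produces $+\Delta_w(\CC)$ inside the square bracket of \eqref{eq6_wd3_q+1_b}, whereas the printed statement has $-\Delta_w(\CC)$, so your claim of landing on "precisely the stated form" is not literally true; the printed sign appears to be a typo (for the $[5,1,5]_4\,3$ repetition code with $w=4$ one checks directly that $\A_4^{\Sigma}(\V^{(3)})=270$, which only the $+$ sign reproduces). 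Likewise, the printed \eqref{eq6_wd3_b} and \eqref{eq6_wd3_q+1} omit the overall factor $\binom{n}{w}$ on the second summand that is present in \eqref{eq5_AwSigma<=2MDS}, so a literal substitution as you describe gives the corrected version rather than the displayed one; trust your derivation rather than the displayed formulas.
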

\begin{proof}
\textbf{(i)} Due to covering radius $3$,  in $\CC$ there are not cosets of weight $>3$; therefore for $\CC$ we have \eqref{eq6_wd3_gen}
 where $\binom{n}{w}(q-1)^w$ is the total number of weight $w$ vectors in $\F_q^{n}$.

 The relation \eqref{eq6_wd3_a} follows from \eqref{eq6_wd3_gen}, \eqref{eq4_<=&1}, and \eqref{eq5_extract}.

 To form \eqref{eq6_wd3_b}, we substitute \eqref{eq5_AwSigma<=2MDS} to \eqref{eq6_wd3_gen} with $d=5$.

\textbf{(ii)} We substitute $n=q+1$ to \eqref{eq6_wd3_b} and obtain \eqref{eq6_wd3_q+1}.

To obtain \eqref{eq6_wd3_q+1_b} from \eqref{eq6_wd3_a}, we use \eqref{eq4_Aw1sum_q+1_5}, \eqref{eq5_AwSigma2_q+1_5}, \eqref{eq5_Delta}, and \eqref{eq5_DeltaStar} with $n=q+1$, $d=5$.
\end{proof}

{\small

}
\end{document}